\documentclass[preprintnumbers,aps,prd,showpacs,floatfix,superscriptaddress,nofootinbib,letterpaper]{revtex4-2}
\usepackage[utf8]{inputenc}
\usepackage{graphicx} 
\usepackage{xcolor}
\usepackage{bm}
\usepackage{amsmath, amsthm, amssymb}
\newtheorem{theorem}{Theorem}
\usepackage{hyperref}
\usepackage[top = 2cm, bottom = 2cm, right = 2cm, left =2cm]{geometry}
\begin{document}

\title{On regular black string spacetimes in nonlinear electrodynamics}

\author{G. Alencar\footnote{E-mail: geova@fisica.ufc.br}}
\affiliation{Departamento de F\'isica, Universidade Federal do Cear\'a, Caixa Postal 6030, Campus do Pici, 60455-760 Fortaleza, Cear\'a, Brazil}

\author{V. H. U. Borralho\footnote{Author to whom any correspondence should be addressed.}\footnote{E-mail: victorborralho@fisica.ufc.br}}
\affiliation{Departamento de F\'isica, Universidade Federal do Cear\'a, Caixa Postal 6030, Campus do Pici, 60455-760 Fortaleza, Cear\'a, Brazil}

\author{T. M. Crispim\footnote{E-mail: tiago.crispim@fisica.ufc.br}}
\affiliation{Departamento de F\'isica, Universidade Federal do Cear\'a, Caixa Postal 6030, Campus do Pici, 60455-760 Fortaleza, Cear\'a, Brazil}

\author{M. S. Cunha\footnote{E-mail: marcony.cunha@uece.br}}
\affiliation{Centro de Ciências e Tecnologia, Universidade Estadual do Ceará, 60714-903, Fortaleza, Ceará, Brazil}


\begin{abstract}
In this work, we investigate the coupling of General Relativity with Nonlinear Electrodynamics (NED), governed by a general Lagrangian $\mathcal{L}(\mathcal{F})$, to address the axial singularity of four-dimensional black strings. Through a model-independent analysis, we scrutinize the viability of regular configurations by extending no-go theorems, originally formulated for spherical spacetimes, to cylindrical symmetries. We provide a comprehensive mathematical proof that regular, purely electric black strings cannot be generated by any NED Lagrangian that recovers the Maxwell limit in the weak-field regime, establishing a fundamental constraint for cylindrical topologies. Despite these limitations, we employ specific mathematical frameworks to construct new exact solutions for black strings, including cylindrical analogues of the well-known Bardeen and Hayward regular black hole classes. Each solution is analytically derived, and we demonstrate that their curvature invariants remain finite everywhere, effectively replacing the axial singularity with a regular core. Furthermore, we evaluate the physical consistency of these new metrics by subjecting them to stringent causality and unitarity constraints. Our results provide a comprehensive classification of the conditions under which NED can regularize cylindrical spacetimes and offer new insights into how topological differences between spherical and axial symmetries influence the global structure and the physical viability of non-singular gravitational objects in nonlinear gauge theories.
\end{abstract}

\maketitle

\section{Introduction}
General Relativity (GR) provides an exceptionally accurate description of the macroscopic universe, yet it remains intrinsically plagued by the emergence of spacetime singularities \cite{Penrose:1965, Hawking:1970, Hawking:1966}. At these pathological regions, curvature invariants diverge, and the predictive power of the classical theory completely breaks down. The resolution of this physical limitation remains a major open question, often addressed phenomenologically through the construction of regular (singularity-free) spacetimes. Concurrently, the pursuit of fundamental theories at high energies, such as string theory and its extended objects like D-branes \cite{Horowitz:1991, Polchinski:1995, Duff:1995}, has strongly motivated the exploration of non-spherical horizon topologies \cite{Emparan:2002}. Furthermore, the holographic principle heavily relies on extended black solutions in Anti-de Sitter backgrounds \cite{Maldacena:1998}. Although extended objects are typically associated with extra dimensions \cite{Chamblin:1999by, Chamblin:2000ra, Kanti:2002fx, Hirayama:2001bi, Crispim:2024yjz, Estrada:2024lhk, Kanti:2001cj, Alencar:2026dsq}, string-like geometries with cylindrical horizons, commonly known as black strings, can be remarkably realized within four-dimensional spacetime, following the pioneering work of Lemos \cite{LEMOS}. 
    
However, these classical 4D black strings inherently harbor a curvature singularity along their symmetry axis, demanding theoretical mechanisms for their regularization \cite{Lima:2023jtl, Lima:2023arg, Lima:2022pvc, Bronnikov:2023aya}. Furthermore, black strings are not necessarily isolated systems, \textit{i.e}., they can be surrounded by various types of matter and fields \cite{Lemos_1996, Alencar:2026xfr, Cunha:2022kep, Hendi:2010kv, Ali:2019mxs, Bronnikov:2023aya}, much like their spherical counterparts \cite{Cardoso:2021wlq, Konoplya:2021ube, Medved:2003rga, Cunha:2015yba, Herdeiro:2015waa, Herdeiro:2014goa, Chan:2025oux, Rahmatov:2025gpk, Bakhodirov:2025tpw}. In this context, we can refer to the so-called ``dirty'' black strings, where the standard background geometry is modified by the presence of this surrounding environment. Consequently, their gravitational features can be deformed by these fields, potentially altering the geometry. In this case, the gravitational backreaction of these surrounding fields deforms the standard spacetime geometry, and strikingly, this matter-induced modification can be profound enough to resolve the inherent axial singularity.

 Cylindrically symmetric spacetimes have a long and rich history in General Relativity. Besides black strings, this class includes the Levi-Civita vacuum solution \cite{LeviCivita:1919} Weyl-type configurations, cosmic strings, Melvin magnetic universes, and a variety of self-gravitating electromagnetic systems \cite{Bronnikov:2019clf, Melvin:1963qx,Weyl:1917}. Such geometries have played an important role in the investigation of exact solutions, topological defects, gravitational collapse, and electromagnetic self-confinement mechanisms. More recently, cylindrically symmetric configurations supported by nonlinear electrodynamics have also attracted attention, leading to the construction of self-sustaining solutions and the establishment of several regularity and no-go results \cite{Bronnikov:2003ru,Sokolov:2022pxh}.

In this context, a particularly compelling type of surrounding environment is the electromagnetic field, which was first incorporated into the black string framework by Lemos \cite{Lemos_1996}. However, the inclusion of electromagnetic sources governed by Nonlinear Electrodynamics (NED) within cylindrical spacetimes remains largely underexplored, with only a few instances in the literature (see, e.g., \cite{Bronnikov:2023aya,Alencar:2024yvh,Lima:2023arg}). This class of nonlinear theories was originally introduced by Born and Infeld in 1934 to resolve the divergence of the electron's self-energy in classical electrodynamics \cite{Born:1934gh}. More general extensions of the Einstein--Maxwell framework have also been proposed, including nonlinear constitutive electromagnetic tensors and non-Riemannian geometric contributions \cite{Cotton2021}. Decades later, this framework was naturally extended to GR, where NED models have proven remarkably successful in generating regular black hole solutions \cite{Ayon-Beato:1998hmi, Hayward:2005gi, Fan:2016hvf, Bronnikov:2024izh, Bolokhov:2024sdy, Rodrigues:2018bdc, Rodrigues:2017yry, deSSilva:2024fmp, Rodrigues:2022qdp}, and other non-singular configurations \cite{Alencar:2025jvl, Rodrigues:2023vtm, Alencar:2024yvh, Rodrigues:2022mdm, Pereira:2024rtv, Furtado:2022tnb, Bronnikov:2022bud, Crispim:2024lzf, Crispim:2024dgd, Silva:2025eip}, effectively smoothing out curvature singularities in spherically symmetric spacetimes. Recent studies have shown that, in regular black holes coupled to nonlinear electrodynamics, the Komar mass and the electric and magnetic charges are not independent, while new theoretical constraints have also been established for regular electrically charged and dyonic solutions \cite{Bokulic:2025brf}.

Despite its theoretical appeal and success in singularity resolution, the application of NED involves significant physical and mathematical challenges. From a fundamental perspective, regularizing a spacetime typically requires the nonlinear matter source to violate certain standard energy conditions -- most notably the Strong Energy Condition (SEC) -- in the deep core region to evade the singularity theorems. Furthermore, the highly nonlinear nature of the associated field equations makes it notoriously difficult to obtain exact, analytical solutions, particularly for extended non-spherical topologies such as black strings. Beyond energy conditions, the nonlinear nature of the electromagnetic field can introduce severe causality issues. In NED frameworks, photons propagate along the null geodesics of an effective geometry rather than the background spacetime. This discrepancy often leads to pathological behaviors, such as superluminal signal propagation or birefringence \cite{Novello:1999pg, Russo:2022qvz, deMelo:2014isa, Russo:2024llm, Abalos:2015gha, dePaula:2024yzy}. Consequently, constructing a physically sound NED Lagrangian that successfully regularizes the axial singularity of a black string while strictly preserving causality and unitarity remains a highly non-trivial mathematical task.

Motivated by these theoretical challenges, we systematically investigate the coupling of GR with NED in the context of cylindrical symmetry. In its most general formulation, a NED theory is governed by a Lagrangian density $\mathcal{L}(\mathcal{F}, \mathcal{H})$ that depends on both fundamental electromagnetic invariants: $\mathcal{F} = F_{\mu\nu}F^{\mu\nu}$ and the pseudo-invariant $\mathcal{H} = \tilde{F}_{\mu\nu}F^{\mu\nu}$, where $ \tilde{F}_{\mu\nu}=\star F_{\mu\nu}$ denotes the Hodge dual of the Faraday tensor. However, to establish a rigorous foundational analysis, this paper restricts itself to the simpler, yet remarkably rich, class of models defined purely by $\mathcal{L} = \mathcal{L}(\mathcal{F})$.

It is crucial to emphasize that our primary focus is not merely to obtain new exact solutions, but rather to perform a comprehensive, model-independent analysis of the fundamental field equations. When dealing with the NED-GR system, one must carefully distinguish between purely electric, magnetic, and dyonic configurations. In spherically symmetric scenarios, this general analysis leads to well-known no-go theorems~\cite{Bronnikov2023,Bronnikov:2000vy}, demonstrating that purely electric configurations with a standard Maxwellian weak-field limit inevitably fail to produce a regular center, even if the field's energy is finite. An important contribution to the study of cylindrically symmetric configurations in nonlinear electrodynamics was provided by Bronnikov, Shikin and Sibileva \cite{Bronnikov:2003ru}, who established several no-go theorems for regular electric and magnetic fields and discussed the existence of horizonless solitonic solutions. In contrast, the present work is devoted to Lemos black-string spacetimes, which possess Anti-de Sitter asymptotics and may contain event horizons. Within this framework, we investigate regularity conditions for electrically, magnetically, and dyonically charged configurations, and construct novel analytical black-string solutions, analyzing their physical consistency as well as their causal and dynamical viability.

The remainder of this paper is organized as follows. In Sec.~\ref{sec:main_equations}, we establish the theoretical framework, presenting the main equations and explicitly demonstrating the no-go theorems for cylindrical spacetimes. This section includes a detailed analysis of the regularity and asymptotic conditions for magnetic, electric, and dyonic configurations. In Sec.~\ref{sec:charged_bs}, as an application of our analysis, we construct specific charged black string solutions, focusing on Euler-Heisenberg electrodynamics (Case I) and Logarithmic electrodynamics (Case II). In Sec.~\ref{sec:regular_bs}, we investigate the formation of entirely regular charged black strings, directing our attention to the Bardeen class (Case I) and Hayward class (Case II).  Section~\ref{sec:causality} is devoted to the analysis of causality and unitarity constraints to ensure the physical validity of the nonlinear models. Finally, our conclusions and final remarks are presented in Sec.~\ref{sec:conclusions}.

We adopt the metric signature $(-,+,+,+)$ and use the geometric units where $G = c =\hbar= 1$.

\section{Main equations and no-go theorems}\label{sec:main_equations}
In order to obtain black string solutions supported by nonlinear electromagnetic fields, we consider the coupling of NED to gravity within the framework of the Einstein--Hilbert action with a cosmological constant. The total action is constructed by supplementing the gravitational sector with a nonlinear electromagnetic Lagrangian density $\mathcal{L}(\mathcal{F})$. 

As mentioned earlier, in its most general formulation, a NED theory is described by a Lagrangian density $\mathcal{L}(\mathcal{F},\mathcal{H})$, where $\mathcal{F}$ and $\mathcal{H}$ are the two independent electromagnetic invariants constructed from the field strength tensor $F_{\mu\nu}$. The scalar invariant is defined as $\mathcal{F}\equiv F_{\mu\nu}F^{\mu\nu}$, while the pseudo-scalar invariant $\mathcal{H}\equiv F_{\mu\nu}\tilde{F}^{\mu\nu}$ involves the dual electromagnetic tensor $\tilde{F}^{\mu\nu}$. This general dependence allows the theory to incorporate parity-violating contributions and more intricate nonlinear interactions between the electromagnetic field components \cite{Peskin1995,Plebanski1970}.

However, in order to simplify our analysis and focus on the essential nonlinear electromagnetic effects relevant for the solutions considered in this work, we restrict ourselves to the subclass of theories for which the Lagrangian depends solely on the invariant $\mathcal{F}$, namely
$\mathcal{L}=\mathcal{L}(\mathcal{F})$. This restriction is commonly adopted in the literature and already encompasses a wide variety of physically relevant NED models, including those capable of generating regular or modified black object geometries. The total action is then given by
\begin{equation}
    S=\int d^4x \sqrt{-g}\Big\lbrace R-2\Lambda + 2\mathcal{L}(\mathcal{F}) \Big\rbrace,
\end{equation}
where $R$ is the Ricci scalar, $\mathcal{L}(\mathcal{F})$ denotes the general Lagrangian density for NED. The field $F_{\mu\nu}$ is naturally described by the 2-form $\bm{F} = \frac{1}{2} F_{\mu\nu} dx^\mu \wedge dx^\nu$, which is given by the exterior derivative of the gauge connection 1-form $\bm{A} = A_\mu dx^\mu$, such that $\bm{F} = \bm{d}\bm{A}$. In components, this corresponds to $F_{\mu\nu}= 2\partial_{[\mu}A_{\nu]}$.

By varying the action with respect to the metric tensor $g^{\mu\nu}$, we obtain the Einstein field equations
\begin{equation}\label{efe}
G_{\mu\nu}+ \Lambda g_{\mu\nu}= T_{\mu\nu}.
\end{equation}
For a general nonlinear electromagnetic Lagrangian density $\mathcal{L}(\mathcal{F})$, the corresponding energy--momentum tensor is
\begin{equation}
  T_{\mu}^{\ \nu}=
\delta_{\mu}^{\ \nu}\,\frac{\mathcal{L}(\mathcal{F})}{2}
- 2\,\mathcal{L}_{\mathcal{F}}
\,F_{\mu}^{\ \lambda} F_{\lambda}^{\ \nu},
\end{equation}
where we defined $\mathcal{L}_{\mathcal{F}} \equiv \partial \mathcal{L} / \partial \mathcal{F}$.

To obtain the electromagnetic field equations, we vary the action with respect to the gauge potential $A^\mu$. This yields the generalized Maxwell equations associated with nonlinear electrodynamics,
\begin{equation}
\nabla_\mu \left( \mathcal{L}_{\mathcal{F}} F^{\mu\nu} \right) = 0.
\end{equation}
These equations, together with the Bianchi identity, can be compactly written using differential forms as
\begin{equation}
\bm{d}(\mathcal{L}_{\mathcal{F}} \star \bm{F}) = 0, \quad \bm{d}\bm{F} = 0,
\end{equation}
where $\star$ denotes the Hodge dual operator. In the linear limit $\mathcal{L}(\mathcal{F}) = -\mathcal{F}$, one recovers the standard Maxwell equations.

For the black string, we use the following ansatz for the static cylindrically symmetric metric \cite{LEMOS}
\begin{equation}
    ds^{2}=-f(r)dt^{2}+\frac{1}{f(r)}dr^{2}+r^{2}d\phi^{2}+\frac{r^{2}}{\ell^{2}}dz^{2},
\end{equation}
where \(f(r)\) can be expressed in terms of the linear mass density of the black string, \(m(r)\),
\begin{equation}\label{metric}
    f(r)=-\frac{4m(r)\ell}{r}+\frac{r^{2}}{\ell^{2}},
\end{equation}
assuming a negative cosmological constant of the form \(\Lambda=-3/\ell^{2}\).

Here \(\phi\) is a periodic coordinate, \(\phi \sim \phi + 2\pi\), while \(z \in \mathbb{R}\) is noncompact. Consequently, the spatial section of the horizon has cylindrical topology \(S^{1}\times \mathbb{R}\), which is the characteristic topology of a black string \cite{LEMOS, Lemos:2000un, Sheykhi:2020fqf}. This global identification distinguishes the present spacetime from the toroidal \(k=0\) topological black-hole case, for which both transverse directions are compact and the horizon topology is \(S^{1}\times S^{1}\) \cite{Lemos:2000un, Sheykhi:2020fqf}. Although the metric may be locally reminiscent of the \(k=0\) sector of topological black holes, the present solution should be interpreted globally as a cylindrically symmetric black string rather than as a toroidal black hole.\\
This distinction can be understood more systematically from the viewpoint of the isometry group acting on the two-dimensional surfaces defined by \(t=\mathrm{const}\) and \(r=\mathrm{const}\). As discussed in \cite{Sheykhi:2020fqf}, these surfaces admit a two-parameter Abelian isometry group \(G_2\), generated by two commuting Killing vectors. Different global identifications of the corresponding group orbits lead to different horizon topologies, even when the local metric structure remains unchanged. In particular, compactifying both Killing directions produces the flat torus \(T^2\simeq S^1\times S^1\), whereas compactifying only the angular direction while leaving the second Killing orbit noncompact yields the cylindrical topology \(R\times S^1\). Therefore, the distinction between toroidal and cylindrical horizons is not encoded in the local form of the metric itself, but rather in the global structure of the \(G_2\) orbits and the associated coordinate identifications.\\
In the present work, the Killing vector \(\partial_\phi\) generates closed orbits, while \(\partial_z\) generates an open noncompact orbit. Consequently, the horizon belongs to the cylindrical class \(R\times S^1\) discussed in \cite{LEMOS,Sheykhi:2020fqf}, rather than to the toroidal class \(S^1\times S^1\). From this perspective, the geometry is locally related to the \(k=0\) topological sector, but globally corresponds to a black-string spacetime.

For this metric, the components of the Einstein tensor are given by
\begin{align}
    G^{\;0} _{0}& = G^{\;1} _{1}=\frac{3}{\ell^2}-\frac{4   m '(r)\ell}{r^2} \\
    G^{\;2} _{2} &= G^{\;3} _{3}=\frac{3}{\ell^2} -\frac{2   m''(r) \ell}{r}
\end{align}
\subsection{Regularity and asymptotic conditions}
In this subsection, we analyze the conditions on the function $m(r)$ that guarantee the regularity of the spacetime. To this end, one may study the Ricci scalar $\mathcal{R} = g^{\mu\nu}R_{\mu\nu}$, the Kretschmann scalar $\mathcal{K}$, or the squared Ricci tensor $R_{\mu\nu}R^{\mu\nu}$. For this purpose, we compute the non-vanishing components of the Riemann tensor $R_{\alpha \beta}^{\;\;\;\;\rho\sigma}$
\begin{equation} \label{riemann}
     \begin{split}
        R_{01}^{\;\;01}&=\frac{4  m (r)\ell}{r^3}-\frac{4   m '(r)\ell}{r^2}+\frac{2   m ''(r)\ell}{r}-\frac{1}{\ell ^2} , \\
         R_{02}^{\;\;02}= R_{03}^{\;\;03}&= R_{12}^{\;\;12}= R_{13}^{\;\;13}=-\frac{2  m (r) \ell }{r^3}+\frac{2 m'(r)\ell }{r^2}-\frac{1}{\ell ^2} ,\\
    R_{23}^{\;\;23}&=\frac{4  m(r) \ell }{r^3}-\frac{1}{\ell ^2}.
    \end{split}
\end{equation}

Thus, the Kretschmann scalar $\mathcal{K}=R_{\alpha \beta \rho \sigma}R^{\alpha \beta \rho \sigma}$ can be written in terms of the squares of these components as
\begin{equation}
    \mathcal{K}=4( R_{01}^{\;\;01})^2+4( R_{02}^{\;\;02})^2+4( R_{03}^{\;\;03})^2+4(R_{12}^{\;\;12})^2+4(R_{13}^{\;\;13})^2+4( R_{23}^{\;\;23})^2.
\end{equation}
implying that $\mathcal{K}$ is finite if each component $R_{ab}^{\;\;ab}$ is finite.
Another important point is that, since the non-vanishing components of the Riemann tensor are written in the form $R_{ab}^{\;\;ab}$, the other curvature invariants can also be expressed in terms of these components and will likewise be finite provided that each of them is finite \cite{Bronnikov2023}. Therefore, the finiteness of these components constitutes a necessary and sufficient condition for the regularity of the spacetime \cite{Bronnikov:2012wsj}.

From Eq.~\eqref{riemann}, we observe that possible divergences may arise as $r \to 0$. 
Therefore, the function $m(r)$ must behave at least as
\begin{equation} \label{regularity1}
    m(r) \approx \mathcal{O}(r^3) \quad \text{as } r \to 0,
\end{equation}
which implies that
\begin{equation} \label{regularity2}
    m'(r)\approx \mathcal{O}(r^2) \quad \text{and} \quad m''(r)\approx \mathcal{O}(r).
\end{equation}
Under these conditions, the components of the Riemann tensor remain finite, 
which in turn guarantees the finiteness of all scalar curvature invariants. 
Consequently, the behavior of $m(r)$ near the origin completely determines 
the absence of scalar curvature singularities in the spacetime.

Since we are dealing with an asymptotically AdS solution $(\Lambda < 0)$, the mass function must satisfy
\begin{equation}\label{assintotico}
    m(r) \longrightarrow \mu \quad \text{as } r \to \infty,
\end{equation}
where $\mu$ is a constant. Consequently, $m'(r) \to 0$ and $m''(r) \to 0$ in this limit. Under these conditions, the AdS term dominates the asymptotic behavior of the curvature, ensuring that the spacetime approaches Anti-de Sitter geometry at infinity.

\subsection{Magnetic solutions}
Since we aim to generalize magnetically charged black string solutions, we consider a purely magnetic configuration compatible with cylindrical symmetry. In coordinates $(t,r,\phi,z)$, a convenient ansatz for the electromagnetic 2-form is
\begin{equation}\label{magnetico}\bm{F} = P  d\phi \wedge dz,
\end{equation}
where $P$ is a constant related to the magnetic charge per unit length of the string. This closed 2-form trivially satisfies the Bianchi identity, $\bm{dF} = 0$. This configuration corresponds to a radial magnetic field, since the magnetic flux pierces cylindrical surfaces of constant radius, whose normal vector points along the $r$ direction. With this choice, the electromagnetic invariant becomes
\begin{equation}
\mathcal{F}=\mathcal{F}(r) = \frac{2P^2\ell^2}{r^4}.
\end{equation}
In this case, it is possible to invert this relation and write $r = r(\mathcal{F})$, thereby simplifying the mathematical structure so that all equations depend solely on the invariant $\mathcal{F}$.

Because $\mathcal{F}$ depends only on $r$, the derivative $\mathcal{L}_{\mathcal{F}}$ is also a function of $r$ alone. Furthermore, due to the background symmetries, the Hodge dual of the Faraday tensor is a 2-form proportional to $dt \wedge dr$. Consequently, the generalized Maxwell equation is identically satisfied, $\bm{d}(\mathcal{L}_{\mathcal{F}} \star \bm{F}) = 0$, since the exterior derivative of a 2-form proportional to $dt \wedge dr$ with $r$-dependent coefficients vanishes identically (as $dr \wedge dt \wedge dr = 0$). This powerful feature holds regardless of the specific form of the nonlinear Lagrangian density.

 Substituting \eqref{metric} into \eqref{efe}, we obtain the following system of differential equations:
\begin{align} \label{mu}
&\frac{4m'(r)\ell}{r^2} + \frac{1}{2} \mathcal{L}(\mathcal{F}(r) ) = 0,  \\
&\frac{2 m''(r) \ell}{r}
+ \frac{1}{2} \left[\mathcal{L}(\mathcal{F}(r) )
- \frac{4 P^2 \ell^2 \mathcal{L_F}}{r^4}\right] = 0. \label{nu}
\end{align}
From these equations, we proceed to obtain solutions for different cases involving NED. We can obtain $m(r)$ by directly integrating Eq.~\eqref{mu}:
\begin{equation} \label{magneticoeqfinal}
    m'(r) = -\frac{r^2\mathcal{L}(\mathcal{F}(r))}{8\ell}
\end{equation}
To ensure the regularity of $m(r)$, we must require that as $r \to 0$ ($\mathcal{F} \to \infty$), $\mathcal{L}(\mathcal{F})$ tends to a finite value $\mathcal{L}_0$, and $\mathcal{L}_{\mathcal{F}} \to 0$. In this way, $m'(r) \approx \mathcal{O}(r^2)$, as shown in \eqref{regularity2}, ensuring a regular core instead of a curvature singularity. Analyzing the asymptotic behavior, in the case of an AdS solution, as $r \to \infty$ we have $\mathcal{F} \to 0$, $\mathcal{L}(\mathcal{F}) \approx -\mathcal{F}$, and $\mathcal{L}_{\mathcal{F}}$ approaches a constant value. In this regime, since $\mathcal{F} \propto r^{-4}$, it follows that
\begin{equation}
    m'(r) \propto r^{-2} \implies m(r) = \mu + \frac{C}{r}.
\end{equation}
Taking the limit $r \to \infty$, $m(r)$ approaches a constant value, as shown in \eqref{assintotico}.
\subsection{Electric solutions}
To generalize electrically charged solutions, we consider a purely electric configuration for the electromagnetic tensor. In this case, the 2-form can be written as
\begin{equation}\label{eletrico}
    \bm{F} = E(r)\, dt \wedge dr ,
\end{equation}
where $E(r)$ denotes the radial electric field, directed outward from the symmetry axis of the black string and perpendicular to its longitudinal extension along the $z$-direction. In other words, the electric field lines lie in the transverse $(r,\phi)$ plane and point radially away from (or toward) the axis, reflecting the cylindrical symmetry of the spacetime.

For this type of configuration, the Hodge dual of the Faraday tensor is given by
\begin{equation}
    \star \bm{F} = -\frac{r^2}{\ell} E(r)\, d\phi \wedge dz .
\end{equation}
Therefore, the generalized Maxwell equation becomes
\begin{equation}
    \bm{d}\!\left(\mathcal{L}_{\mathcal{F}} \star \bm{F}\right)
    =
    \partial_r\!\left(\frac{r^2}{\ell} \mathcal{L}_{\mathcal{F}} E(r) \right)
    dr \wedge d\phi \wedge dz
    =
    0 .
\end{equation}
Since $dr \wedge d\phi \wedge dz \neq 0$, it follows that
\begin{equation} \label{campoeletrico}    \partial_r\!\left(\frac{r^2}{\ell} \mathcal{L}_{\mathcal{F}} E(r) \right)=0
    \quad \Rightarrow \quad
    E(r)=\frac{q }{r^2 \mathcal{L}_{\mathcal{F}}} ,
\end{equation}
where $\mathcal{L}_{\mathcal{F}}=\frac{\partial \mathcal{L}}{\partial \mathcal{F}}$, and $q$ is an integration constant associated with the electric charge per unit length. In the linear limit, where $\mathcal{L}_{\mathcal{F}} = -1$, the standard Maxwell electric field is recovered. Therefore, in contrast to the purely magnetic configuration, the electric case does not satisfy the generalized Maxwell equation trivially. Instead, it leads to a nontrivial differential equation that must be solved to determine the electric field profile $E(r)$, whose explicit form depends on the specific choice of the nonlinear Lagrangian density.

With these choices, the electromagnetic invariant becomes
\begin{equation}\label{inveletrico}
    \mathcal{F}=\mathcal{F}(r)=-\frac{2 q^2 }{r^4 \mathcal{L_F}^2}.
\end{equation}
 Unlike the magnetic case, here it is not always possible to write $r = r(\mathcal{F})$, since this depends on the monotonicity of $\mathcal{F}(r)$, namely on $\frac{\partial \mathcal{F}}{\partial r}$.

From this point on, the resulting system of differential equations is
\begin{equation} \label{eletric1}
     \frac{4   m '(r)\ell}{r^2}+ \frac{\mathcal{L}(\mathcal{F}(r) )}{2}+\frac{2 q^2}{r^4\mathcal{L_F}} =0,
    \end{equation}
    \begin{equation}\label{eletric}
      \frac{2   m ''(r)\ell}{r} + \frac{\mathcal{L}(\mathcal{F}(r) )}{2}=0.
\end{equation}
In this way, we generalize the field equations in order to obtain electrically charged solutions.

For this configuration, the components of the energy-momentum tensor are given by
\begin{align}
    &T_{0}^{\;0}=T_{1}^{\;1}=\frac{1}{2} \mathcal{L}(\mathcal{F}(r))-\mathcal{F}(r) \mathcal{L_F} \\
    &T_{2}^{\;2}=T_{3}^{\;3}=\frac{1}{2} \mathcal{L}(\mathcal{F}(r))
\end{align}

To ensure a regular static cylindrically symmetric metric, each component of the energy-momentum tensor should be finite. Thus, we have the following theorem:
\begin{theorem} \label{teorema1}
Any purely electric static solution $(q \neq 0, P = 0)$ with cylindrical symmetry that is regular at the axis $r=0$ cannot be generated by a Lagrangian that recovers the Maxwell limit $\mathcal{L}(\mathcal{F}) \approx -\mathcal{F}$ for small values of $\mathcal{F}$ (weak field limit).
\end{theorem}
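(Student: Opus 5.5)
The plan is to mimic the Bronnikov-type argument for spherical symmetry, using the electric-case quantities computed above. Regularity at the axis requires, via \eqref{regularity1}--\eqref{regularity2} and the Einstein equations \eqref{efe}, that every component $T_\mu^{\ \nu}$ stays finite as $r\to 0$. From the electric components this means $\mathcal{L}(\mathcal{F}(r))$ and $\mathcal{F}\mathcal{L}_{\mathcal{F}}$ both remain finite there. We also need $T_0^{\ 0}-T_2^{\ 2} = -\mathcal{F}\mathcal{L}_{\mathcal{F}}$ to be finite, and in fact to vanish at $r=0$. The last point follows because, with $m=\mathcal{O}(r^3)$, Eq.~\eqref{riemann} gives $G_0^{\ 0}-G_2^{\ 2} = -4m'\ell/r^2+2m''\ell/r \to 0$.

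The key step uses the first integral \eqref{campoeletrico}, $E=q/(r^2\mathcal{L}_{\mathcal{F}})$, together with \eqref{inveletrico}:
\begin{equation}
\mathcal{F}\mathcal{L}_{\mathcal{F}}^2 = -\frac{2q^2}{r^4}, \qquad \text{i.e.}\qquad (\mathcal{F}\mathcal{L}_{\mathcal{F}})\,\mathcal{L}_{\mathcal{F}} = -\frac{2q^2}{r^4}.
\end{equation}
Since $q\neq 0$, the right-hand side diverges as $r\to0$. Finiteness of $\mathcal{F}\mathcal{L}_{\mathcal{F}}$ therefore forces $|\mathcal{L}_{\mathcal{F}}|\to\infty$ at the axis. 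This in turn forces $\mathcal{F}\to 0$, since $|\mathcal{F}| = 2q^2/(r^4\mathcal{L}_{\mathcal{F}}^2)$ and $\mathcal{F}\mathcal{L}_{\mathcal{F}}$ is bounded. So at $r=0$ the invariant returns to its weak-field value $\mathcal{F}=0$, where $\mathcal{L}_{\mathcal{F}}$ is infinite.

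On the other hand, the Maxwell weak-field assumption $\mathcal{L}\approx-\mathcal{F}$ for small $\mathcal{F}$ gives $\mathcal{L}_{\mathcal{F}}\to -1$ as $\mathcal{F}\to0$. For a single-valued Lagrangian $\mathcal{L}(\mathcal{F})$ this contradicts $|\mathcal{L}_{\mathcal{F}}|\to\infty$ at the same point $\mathcal{F}=0$. Hence no such regular solution exists. I would phrase this as: at large $r$, $\mathcal{F}\to0$ with $\mathcal{L}_{\mathcal{F}}\to-1$ (Maxwell asymptotics), while at $r=0$ we again have $\mathcal{F}\to0$ but $\mathcal{L}_{\mathcal{F}}\to\infty$. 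Thus $\mathcal{F}(r)$ is non-monotonic, and $\mathcal{L}_{\mathcal{F}}$ as a function of $\mathcal{F}$ would take two different limits at $\mathcal{F}=0$, which is impossible for a Lagrangian depending on $\mathcal{F}$ alone.

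The main obstacle is making the vanishing and finiteness requirements rigorous. One must justify that regularity really forces $\mathcal{F}\mathcal{L}_{\mathcal{F}}$ to be bounded, rather than only $\mathcal{L}$. Using both independent components $T_0^{\ 0}$ and $T_2^{\ 2}$ handles this, since $T_0^{\ 0}-T_2^{\ 2}=-\mathcal{F}\mathcal{L}_{\mathcal{F}}$ is itself a combination of Einstein-tensor components. One must also exclude the degenerate possibility $q=0$, which the hypothesis excludes. A secondary subtlety is whether $\mathcal{L}_{\mathcal{F}}$ must be continuous at $\mathcal{F}=0$. The remark after \eqref{inveletrico} about non-invertibility of $r(\mathcal{F})$ is exactly what allows a multivalued $\mathcal{L}$ (a branched Lagrangian) to escape the theorem. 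The statement assumes a single Lagrangian with Maxwell behavior near $\mathcal{F}=0$, so I would state this explicitly as the point where the contradiction is obtained.
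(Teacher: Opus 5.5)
Your argument is correct and is essentially the paper's proof: finiteness of the stress-tensor components (which you justify more carefully via $T_0^{\ 0}-T_2^{\ 2}=-\mathcal{F}\mathcal{L}_{\mathcal{F}}$), combined with $\mathcal{F}\mathcal{L}_{\mathcal{F}}^2=-2q^2/r^4$, forces $\mathcal{F}\to 0$ and $|\mathcal{L}_{\mathcal{F}}|\to\infty$ at the axis, which is incompatible with a single-valued Lagrangian having $\mathcal{L}_{\mathcal{F}}\to -1$ as $\mathcal{F}\to 0$. Your extra observations are correct refinements but are not needed for the contradiction: that $\mathcal{F}\mathcal{L}_{\mathcal{F}}$ actually vanishes on the axis, and that only a branched, multivalued Lagrangian could evade the conclusion.
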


\begin{proof} 
For our metric, the Ricci tensor is diagonal, and the invariant $R_{\mu\nu} R^{\mu\nu}$ is given by the sum of the squares of $R_{\mu}^{\;\nu}$. As discussed previously, in order to ensure regularity, each term must remain finite at every point (including the origin) so that the metric itself is regular. This implies that, to obtain regular solutions, the components of the energy-momentum tensor must also be finite. 
Hence, we find that the term $\mathcal{F}{\mathcal{L}_{\mathcal{F}}}^2 \propto r^{-4}$ [Eq. \eqref{inveletrico}], which is singular at the origin. To guarantee that $|\mathcal{F}\mathcal{L}_{\mathcal{F}}|<\infty$, one must have  $\mathcal{F} \to 0$ and $\mathcal{L}_{\mathcal{F}} \to \infty$ as $r \to 0$; therefore, the Maxwell limit is not recovered at the axis $r=0$.
\end{proof}

The theorem \ref{teorema1} is already known in the case of spherically symmetric metrics ~\cite{Bronnikov2023,Bronnikov:2000vy}. Therefore, we show that its validity is not restricted to spherically symmetric spacetimes, but also holds in the cylindrical case.

Analyzing the asymptotic conditions for $m'(r)$, we can rewrite Eq.~\eqref{eletric} as
\begin{equation}
    m''(r) = -\frac{r \mathcal{L}(\mathcal{F})}{4 \ell}.
\end{equation}
Assuming $\mathcal{L}(\mathcal{F}) \approx -\mathcal{F}$ as $r \to \infty$, we have $\mathcal{F} \propto r^{-4}$. Therefore,
\begin{equation}
    m''(r) \propto r^{-3} \implies m(r) = \mu + \frac{C}{r}.
\end{equation}
In the limit $r \to \infty$, we recover a constant value for $m(r)$. With these considerations, for purely electric solutions we can satisfy only two out of the following three statements:
\begin{enumerate}
\item The spacetime is generated by a purely electric field with a non-zero charge.
\item The non-linear electrodynamics theory recovers the Maxwell limit in the weak-field regime at spatial infinity ($\mathcal{L}_{\mathcal{F}} \to \text{constant}$ as $r \to \infty$).
\item The geometry possesses a regular center, avoiding curvature singularities as $r \to 0$.
\end{enumerate}

\subsection{Dyonic solutions}
Now we analyze the situation where both the electric and magnetic sectors are present simultaneously. In this case, the ansatz for the 2-form reads
\begin{equation}
    \bm{F}=E(r) dt \wedge dr + P d\phi \wedge dz.
\end{equation}
For the dyonic case, the electromagnetic invariant is given by
\begin{equation}
    \mathcal{F}=\frac{2P^2}{r^4 \ell^2} - \frac{2q^2 }{r^4 \mathcal{L_F}^2}.
\end{equation}
From the generalized Maxwell equations, we find that the electric field has the form $E(r)={q }/{r^2 \mathcal{L}_{\mathcal{F}}}$; however, now $\mathcal{L}_{\mathcal{F}}=\mathcal{L}_{\mathcal{F}}(\mathcal{F}(E,P))$.
In this way, our system of differential equations reads
\begin{equation} \label{dionic1}
    \frac{\mathcal{L}(\mathcal{F}(r) )}{2}+ \frac{2 q^2}{r^4 \mathcal{L_F}}+\frac{4 m'(r)\ell}{r^2} =0
    \end{equation} 
    \begin{equation}\label{dionic2}
    \frac{\mathcal{L}(\mathcal{F}(r) )}{2}-\frac{2 P^2 \ell ^2 \mathcal{L_F}}{r^4}+\frac{2 m''(r)\ell }{r} =0.
\end{equation}
The inclusion of the magnetic sector does not modify the regularity analysis, since the leading-order behavior of the field invariants near the origin remains unchanged under the imposed conditions. 

Let the components of the energy-momentum tensor be:
\begin{equation}
   T_{0}^{\;0}=T_{1}^{\;1}=   \frac{\mathcal{L}(\mathcal{F}(r) )}{2}+ 2\mathcal{L_F} E(r)^2,
\end{equation}
and
\begin{equation}
   T_{2}^{\;2}=T_{3}^{\;3}=   \frac{\mathcal{L}(\mathcal{F}(r) )}{2}-\frac{2 P^2 \ell ^2 \mathcal{L_F}}{r^4}.
\end{equation}
From the previous considerations, each term must be finite at $r=0$ in order to obtain regular solutions. As $r\to0$, regularity requires $m'(r)=\mathcal O(r^2)$ and $m''(r)=\mathcal O(r)$, so that the last terms in Eqs.~\eqref{dionic1} and \eqref{dionic2} remain finite. The electric contribution in Eq.~\eqref{dionic1} behaves as $1/(r^4\mathcal L_F)$, implying that $\mathcal L_F$ must diverge at least as fast as $r^{-4}$. Conversely, the magnetic contribution in Eq.~\eqref{dionic2} behaves as $\mathcal L_F/r^4$, which requires $\mathcal L_F=\mathcal O(r^4)$. These two conditions are mutually incompatible. Therefore, no regular dyonic solution can satisfy both requirements simultaneously. It is worth noting that this behavior also persists in the black string case. Thus, we can state the following theorem:
\begin{theorem} \label{teorema2}
Static, cylindrically symmetric dyonic solutions to the Einstein-NED equations with an arbitrary $\mathcal{L}(\mathcal{
F})$ cannot describe spacetimes with a regular axis at $r=0$.
\end{theorem}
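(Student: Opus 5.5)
The plan is to formalize the argument sketched just before the statement. Regularity at $r=0$ is equivalent to finiteness of every $R_{ab}^{\;\;ab}$ in Eq.~\eqref{riemann}, and therefore of the diagonal Ricci components. Through Eq.~\eqref{efe}, this forces both independent components of the energy-momentum tensor to remain bounded as $r\to0$:
\begin{equation*}
T_0^{\;0}=\frac{\mathcal L}{2}+\frac{2q^2}{r^4\mathcal L_{\mathcal F}},\qquad
T_2^{\;2}=\frac{\mathcal L}{2}-\frac{2P^2\ell^2\mathcal L_{\mathcal F}}{r^4},
\end{equation*}
where we used $E=q/(r^2\mathcal L_{\mathcal F})$ from the generalized Maxwell equation, with $q\neq0$ and $P\neq0$. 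I will not try to bound each term separately. That would implicitly assume $\mathcal L$ is finite, and an arbitrary $\mathcal L$ need not be. I will work instead with the combination
\begin{equation*}
T_0^{\;0}-T_2^{\;2}=\frac{2}{r^4}\left(\frac{q^2}{\mathcal L_{\mathcal F}}+P^2\ell^2\mathcal L_{\mathcal F}\right),
\end{equation*}
in which $\mathcal L$ cancels identically. This difference must stay bounded, so
\begin{equation*}
X(r)\equiv \frac{q^2}{\mathcal L_{\mathcal F}}+P^2\ell^2\mathcal L_{\mathcal F}=\mathcal O(r^4).
\end{equation*}

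The key step is to show that $X$ cannot vanish like $r^4$. If $\mathcal L_{\mathcal F}>0$, the AM--GM inequality gives $X\ge 2|q|P\ell>0$, so the difference diverges like $r^{-4}$, which is a contradiction. The harder case is $\mathcal L_{\mathcal F}<0$, the Maxwell-like sign. There $X$ can vanish only if $\mathcal L_{\mathcal F}\to -|q|/(P\ell)$ with a fine-tuned rate. I will close this case using the invariant itself. Writing $\mathcal L_{\mathcal F}=-|q|/(P\ell)+\delta(r)$, the condition $X=\mathcal O(r^4)$ forces $\delta=\mathcal O(r^4)$. Substituting into
\begin{equation*}
\mathcal F=\frac{2}{r^4}\left(\frac{P^2}{\ell^2}-\frac{q^2}{\mathcal L_{\mathcal F}^2}\right)
\end{equation*}
(keeping the paper's normalization, with care about the $\ell^{\pm2}$ factor) shows that $\mathcal F$ tends to a finite constant $\mathcal F_0$ as $r\to0$. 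Consequently $\mathcal L(\mathcal F)\to\mathcal L(\mathcal F_0)$ is finite and $\mathcal L_{\mathcal F}$ is finite and nonzero. But then $T_2^{\;2}$ contains the term $-2P^2\ell^2\mathcal L_{\mathcal F}/r^4$, which diverges like $r^{-4}$ while $\mathcal L/2$ stays bounded, again a contradiction.

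The last point to check is that the singular behaviour is genuine. The divergence of $T_\mu^{\;\nu}$ transfers back to $m''$ or $m'$ through Eqs.~\eqref{dionic1}--\eqref{dionic2}, and hence to $R_{01}^{\;\;01}$ or $R_{02}^{\;\;02}$. This yields an unbounded $R_{\mu\nu}R^{\mu\nu}$, so the axis is not regular.

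The main obstacle is the fine-tuned $\mathcal L_{\mathcal F}<0$ branch. If the normalization of $\mathcal F$ does not produce a finite limit $\mathcal F_0$ there, the fallback is a bounded-ratio argument: divergence of $\mathcal L$ would have to cancel the $r^{-4}$ term in both $T_0^{\;0}$ and $T_2^{\;2}$ simultaneously, and this can be contradicted using the relation $d\mathcal L/dr=\mathcal L_{\mathcal F}\,d\mathcal F/dr$.
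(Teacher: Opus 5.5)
Your central move is sound. Subtracting $T_2^{\;2}$ from $T_0^{\;0}$ makes $\mathcal{L}$ drop out. However, your analysis of the branch $\mathcal{L}_{\mathcal{F}}<0$ rests on an algebra slip. Both summands of $X=q^2/\mathcal{L}_{\mathcal{F}}+P^2\ell^2\mathcal{L}_{\mathcal{F}}$ carry the sign of $\mathcal{L}_{\mathcal{F}}$, so for either sign
\[
|X|=\frac{q^2}{|\mathcal{L}_{\mathcal{F}}|}+P^2\ell^2|\mathcal{L}_{\mathcal{F}}|\ \ge\ 2|qP|\ell .
\]
At $\mathcal{L}_{\mathcal{F}}=-|q|/(P\ell)$ one finds $X=-2|q|P\ell$. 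This is the AM--GM equality point, where $|X|$ is \emph{smallest} and $dX/d\mathcal{L}_{\mathcal{F}}=0$; it is not a zero of $X$. It is instead the zero of the bracket $P^2\ell^2-q^2/\mathcal{L}_{\mathcal{F}}^2$ in $\mathcal{F}$, where the electric and magnetic pieces do enter with opposite signs; you appear to have conflated the two. Hence the claims that $X$ can vanish at a fine-tuned rate, and that $X=\mathcal{O}(r^4)$ forces $\delta=\mathcal{O}(r^4)$, have no valid derivation. Near that point $X=-2|q|P\ell+\mathcal{O}(\delta^2)$, so the branch is simply empty. Everything built on it should be dropped: the finite limit $\mathcal{F}_0$, the worry about the $\ell^{\pm2}$ normalization, and the bounded-ratio fallback. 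The corrected proof is one line: $|T_0^{\;0}-T_2^{\;2}|=2|\mathcal{L}_{\mathcal{F}}|\,(E^2+P^2\ell^2/r^4)\ge 4|qP|\ell/r^4$. Since $R_0^{\;0}-R_2^{\;2}=G_0^{\;0}-G_2^{\;2}=T_0^{\;0}-T_2^{\;2}$, this gives $R_{\mu\nu}R^{\mu\nu}=2(R_0^{\;0})^2+2(R_2^{\;2})^2\ge(R_0^{\;0}-R_2^{\;2})^2\ge 16q^2P^2\ell^2/r^8$.

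With that repair, your route differs from the paper's and is stronger. The paper argues term by term. It requires $\mathcal{L}/2$, $2q^2/(r^4\mathcal{L}_{\mathcal{F}})$ and $2P^2\ell^2\mathcal{L}_{\mathcal{F}}/r^4$ to be \emph{separately} finite. It then deduces $|\mathcal{L}_{\mathcal{F}}|\gtrsim r^{-4}$ from the electric term and $\mathcal{L}_{\mathcal{F}}=\mathcal{O}(r^4)$ from the magnetic one, and notes these are incompatible. Both arguments use the same NED-independent fact: the product of the two pieces is $4q^2P^2\ell^2/r^8$. The paper's version tacitly excludes a cancellation between a divergent $\mathcal{L}$ and the $r^{-4}$ terms, and it relies on power-law heuristics for $\mathcal{L}_{\mathcal{F}}$. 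Your subtraction removes $\mathcal{L}$ exactly and gives a pointwise bound for every $\mathcal{L}(\mathcal{F})$, with no assumption on $\mathcal{L}$ or $\mathcal{L}_{\mathcal{F}}$ near the axis. This is what actually justifies the word "arbitrary" in the statement. It also never uses $\mathcal{F}$, so the $\ell^{-2}$ in the paper's dyonic invariant (a typo relative to the magnetic case) plays no role.
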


Another crucial point is that, although our metric ansatz is defined by Eq.~\eqref{metric}, the cosmological constant required for black string solutions does not affect the regularity conditions. Within the framework of NED, the criteria for a regular origin (or axis) depend solely on the behavior of the energy-momentum tensor components as $r \to 0$. Consequently, these conditions remain invariant whether the spacetime exhibits spherical or cylindrical symmetry.

\section{Charged singular black strings solutions}\label{sec:charged_bs}
As a first application, we construct black string solutions sourced by nonlinear electromagnetic fields that exhibit singular geometries, i.e., geometries that do not satisfy the regularity theorems discussed above. As illustrative examples, we consider two well-known NED models: the Euler--Heisenberg NED and logarithmic NED.
\subsection{Case I: Euler--Heisenberg Electrodynamics}
Initially derived from Dirac's electron--positron theory \cite{EulerHe}, the Euler--Heisenberg framework accounts for quantum vacuum effects, in which strong electromagnetic fields induce virtual electron--positron pair fluctuations. These vacuum polarization effects modify the classical dynamics of the electromagnetic field, leading to nonlinear corrections to Maxwell's equations. Following the work of Euler and Heisenberg, Schwinger derived the effective Lagrangian of quantum electrodynamics using the proper-time formalism. He showed that the renormalization of the field strength and electric charge, applied to the modified Lagrangian, leads to a finite and gauge-invariant result \cite{Schwinger1951}. This formulation reveals the emergence of nonlinear properties of the electromagnetic field in the quantum vacuum. The effective Lagrangian density is given by
\begin{equation} \label{EH}
\mathcal{L}(\mathcal{F})= -\mathcal{F} + \frac{\alpha}{2}\mathcal{F}^2,
\end{equation}
where the parameter $\alpha$ is related to the fine-structure constant and the electron mass. In the limit $\alpha \to 0$, Maxwell electrodynamics is recovered.

In order to obtain purely electric solutions, we employ the ansatz discussed in Eqs.~\eqref{eletrico}-\eqref{eletric}. With this choice, the equation for the electric field reduces to
\begin{equation}
    E(r) + 2\alpha E^3(r) = -\frac{q}{r^2}.
\end{equation}
This is a cubic equation with one real root and two complex conjugate roots. Due to the nonlinear structure of the electric field in this theory, it is not possible to obtain analytic solutions for $m(r)$ (Eq.\eqref{eletric}). Therefore, in what follows, we shall consider a purely magnetic configuration, for which analytic treatment becomes feasible.

For this purpose, we consider the ansatz given by Eqs.~\eqref{magnetico}-\eqref{magneticoeqfinal}. Substituting \eqref{EH} into \eqref{mu}, we obtain the solution
\begin{equation} \label{EHmetric}
   m (r)= \frac{\alpha  P^4 \ell ^3}{20 r^5}-\frac{P^2 \ell }{4 r} + \mu \implies f(r)=-\frac{4 \mu \ell}{r} +\frac{r^2}{\ell^2} + \frac{P^2 \ell^2 }{ r^2}-\frac{\alpha  P^4 \ell ^4}{ r^6}.
\end{equation}
For $r \rightarrow \infty$, we recover the usual vacuum black string solution. And for $r \rightarrow 0$, the $r^{-5}$ term dominates, indicating a strong singularity at $r=0$.

Analyzing the Kretschmann scalar,
\begin{align}
\mathcal{K}&=\frac{8}{{25 r^{16} \ell ^4}} \Big[239 \alpha ^2 P^8 \ell ^{14}- 380 \alpha  P^6 r^4 \ell ^{11}+25 P^4 r^8 \ell ^7 (7 \ell -2 \alpha ) +560  \alpha  \mu  P^4 r^5 \ell ^{10} \nonumber \\
&-600 \mu  P^2 r^9 \ell ^7 
+75 r^{16}+600 \mu ^2 r^{10} \ell ^6\Big]
\end{align}
We see that the NED strengthens the singularity, which now behaves as $r^{-16}$. As $r \rightarrow \infty$, $\mathcal{K}$ approaches a constant value, $24/\ell^4$.

Now, the curvature scalar $\mathcal{R}$,
\begin{equation}
  \mathcal{R} = \frac{4 \alpha P^4 \ell^5}{r^8} - \frac{12}{\ell^2},
\end{equation}
we observe that the presence of the NED introduces a singularity behaving as $r^{-8}$, unlike in the Maxwell case, where $\mathcal{R}$ is unaffected by the presence of charges. 

\subsection{Case II: Logarithmic electrodynamics}
In this subsection, we analyze the logarithmic model of NED, which has attracted considerable attention as a simple and physically well-motivated extension of Maxwell's theory. NED models are often introduced to incorporate corrections that become relevant in the strong-field regime and may regularize some divergences present in classical electrodynamics. In particular, logarithmic electrodynamics has been shown to possess improved behavior at large field strengths and to yield finite-energy configurations for point-like charges, while recovering Maxwell electrodynamics in the weak-field limit. Furthermore, this model has been widely employed in gravitational contexts, including black objects solutions sourced by NED fields. 

The corresponding Lagrangian density is given by \cite{Kruglov_2019, Soleng:1995kn, Hendi2012}
\begin{equation} \label{log}
    \mathcal{L}(\mathcal{F}) = -\beta^2 \ln\left(1 + \frac{\mathcal{F}}{\beta^2}\right),
\end{equation}
where $\beta$ is a parameter that controls the strength of the nonlinear corrections. In the weak-field limit $\mathcal{F} \ll \beta^2$, the Lagrangian reduces to the Maxwell form $\mathcal{L} \approx -\mathcal{F}$, ensuring the classical limit. In the limit where $\mathcal{F} \ll \beta^2$, $\mathcal{L}(\mathcal{F})$ can be expanded as
\begin{equation}
    \mathcal{L}(\mathcal{F}) \approx -\mathcal{F} + \frac{\mathcal{F}^2}{2 \beta^2} + \mathcal{O}(\mathcal{F}^3),
\end{equation}
and is thus equivalent to the Euler-Heisenberg Lagrangian for weak fields. Using \eqref{log} in \eqref{mu}, we obtain the purely magnetic solution:
\begin{align}
    m (r)&= -\frac{P^2 \ell}{{3 r}}  \, _2F_1\left(\frac{1}{4},1;\frac{5}{4};-\frac{2 P^2 \ell ^2}{r^4 \beta ^2}\right)+\frac{\beta ^2 r^3}{24 \ell }\ln \left(\frac{2 P^2 \ell ^2}{\beta ^2 r^4}+1\right)+\mu \implies  \\ \nonumber f(r)&=-\frac{4 \mu \ell}{r} + \frac{P^2 \ell^2}{{3 r^2}}  \, _2F_1\left(\frac{1}{4},1;\frac{5}{4};-\frac{2 P^2 \ell ^2}{r^4 \beta ^2}\right) -\frac{\beta ^2 r^2}{24 }\ln \left(\frac{2 P^2 \ell ^2}{\beta ^2 r^4}+1\right) ,
\end{align}
 where $_2F_1(a,b;c;x)$ is the Gauss' hypergeometric function. Expanding around $\mathcal{F}=0$, we obtain the following expression
\begin{equation}
m(r) \approx -\frac{P^2 \ell}{4r}+\frac{P^4 \ell^3}{20\beta^2 r^5}-\frac{P^6 \ell^5}{27\beta^4 r^9}+ \mathcal{O}(r^{-13}) +\mu .
\end{equation}

After an appropriate redefinition of the constants, this expression coincides with the solution obtained for the Euler--Heisenberg Lagrangian [see Eq.~\eqref{EHmetric}]. We also observe that, in the limit $\beta^2 \to \infty$, the nonlinear corrections vanish and the linear solution is recovered. In our previous analyses, we have shown that, in order for a purely magnetic solution to be regular at the origin, the Lagrangian density $\mathcal{L}(\mathcal{F})$ must approach a constant value in the limit $r \to 0$ (i.e., $\mathcal{F} \to \infty$). 
Notice that the present type of Lagrangian does not generate regular solutions, since $\mathcal{L}$ diverges in the limit $\mathcal{F} \to \infty$. 
Therefore, the nonlinear electrodynamics considered here is not capable of regularizing the geometry at the core.

Computing the Kretschmann scalar, we obtain
\begin{align}
    \mathcal{K}&=\frac{4 P^2 \ell ^3 \, _2F_1\left(\frac{1}{4},1;\frac{5}{4};-\frac{2 P^2 \ell ^2}{r^4 \beta ^2}\right) \left[P^2 \ell  \left(\beta ^2 r^3-48 \mu  \ell \right)-24 \beta ^2 \mu  r^4\right]}{3r^7 \left(2 P^2 \ell ^2+\beta ^2 r^4\right)}+\frac{4 P^4 \ell ^4 \, _2F_1\left(\frac{1}{4},1;\frac{5}{4};-\frac{2 P^2 \ell ^2}{r^4 \beta ^2}\right)^{2}}{3r^8} \nonumber\\
   & +\frac{1}{24}\beta ^4 \ln^{2}\left(\frac{2 P^2 \ell ^2}{\beta ^2 r^4}+1\right)-\frac{ \beta ^2 \left[P^2 \ell ^2 \left(\beta ^2 \ell ^2+24\right)+12 \beta ^2 r^4\right] \ln \left(\frac{2 P^2 \ell ^2}{\beta ^2 r^4}+1\right)}{6(2 P^2 \ell ^4+\beta ^2 r^4 \ell ^2)} \nonumber \\
   &+\frac{1}{r^6 \ell ^4 \left(2 P^2 \ell ^2+\beta ^2 r^4\right)^2} \Big\lbrace P^4 \ell ^4 \left[r^6 \left(\beta ^4 \ell ^4+8 \beta ^2 \ell ^2+96\right)- 32 \beta ^2 \mu  r^3 \ell ^5+768 \mu ^2 \ell ^6\right] \nonumber  \\
   &+ 4 \beta ^2 P^2 r^4 \ell ^2 \left[r^6 \left(\beta ^2 \ell ^2+24\right)-4 \beta ^2 \mu  r^3 \ell ^5+192 \mu ^2 \ell ^6\right]+24 \beta ^4 r^8 \left(r^6+8 \mu ^2 \ell ^6\right) \Big\rbrace
\end{align}

In the limit $r \to \infty$, $\mathcal{K}$ approaches a constant value, as in the vacuum solution. 
However, in this case, the nonlinear electrodynamics strengthens the singularity at the origin. 
Moreover, in the limit $\beta \to \infty$, the standard Maxwell solution is recovered.

Since $\mathcal{K}$ is singular as $r \to 0$, one expects that the Ricci scalar $\mathcal{R}$ exhibits the same behavior. Indeed, it is given by
\begin{equation}
  \mathcal{R}=  \frac{1}{2} \beta ^2 \ln \left(\frac{2 P^2 \ell ^2}{\beta ^2 r^4}+1\right)-\frac{P^2 \ell ^2 \left(\beta ^2 \ell ^2+24\right)+12 \beta ^2 r^4}{2 P^2 \ell ^4+\beta ^2 r^4 \ell ^2}.
\end{equation}

Finally, similarly to the Euler--Heisenberg case, purely electric solutions cannot be obtained in analytic form.

\section{Charged Regular Black strings solutions}\label{sec:regular_bs}
As a second application, we construct regular black string solutions supported by nonlinear electromagnetic fields described by the Bardeen and Hayward NED models. In contrast to the previous examples, these theories are known to generate regular geometries, in the sense that the spacetime curvature invariants remain finite throughout the manifold. In particular, the corresponding nonlinear electromagnetic sources satisfy the conditions required for the existence of regular solutions discussed in the previous sections. These models have been widely studied in the context of regular black holes, where the central singularity is replaced by a regular core sustained by nonlinear electromagnetic effects. Here, we investigate their cylindrical counterparts and analyze the extent to which these regularity properties are preserved in the black string configuration.

\subsection{Bardeen Class}
In this subsection, we obtain a regular charged black string solution by considering the nonlinear electrodynamics (NED) Lagrangian density originally employed by Ayon-Beato and Garcia \cite{Ay_n_Beato_2000}. This specific Lagrangian is well-known in the literature for acting as the physical matter source that generates the Bardeen regular black hole in spherical symmetry \cite{Bardeen1968}. This Lagrangian is given by:
\begin{equation}
    \mathcal{L}(\mathcal{F}) = -\frac{3}{g P^2} 
    \left( 
        \frac{\sqrt{P^2\ell^2\mathcal{F}/2}}
        {1+\sqrt{P^2\ell^2\mathcal{F}/2}}
    \right)^{5/2},
\end{equation}
where $g = |P \ell^2| / 8\mu$. Writing $\mathcal{L}(\mathcal{F})$ in terms of $r$, we obtain
\begin{equation}
  \mathcal{L}(r) = -\frac{3 P^3 \ell ^5}{g \left(P^2 \ell ^2+r^2\right)^{5/2}}.
\end{equation}
In this way, we see that $\mathcal{L}(0) = -{3}/{g P^3 \ell ^2} = \mathcal{L}_0$
is regular at the origin and assumes a constant value, indicating that this type of Lagrangian produces regular solutions, as discussed previously. And for limit $r\to \infty\ (\mathcal{F}\to0)$ we have $\mathcal{L}(\mathcal{F}(r))\to0$.

Since this is a purely magnetic Lagrangian, it suffices to solve Eq.~\eqref{mu}:
\begin{equation}
    \frac{3 P^3 \ell ^5}{2 g \left(P^2 \ell ^2+r^2\right)^{5/2}}
    -\frac{4 m '(r) \ell}{r^2}=0,
\end{equation}
whose solution is
\begin{equation}
    m (r)=\frac{\mu r^3}{\left(P^2 \ell ^2+r^2\right)^{3/2}} \implies f(r)=\frac{-4\mu \ell r^2}{\left(P^2 \ell ^2+r^2\right)^{3/2}} +\frac{r^2}{\ell^2}
\end{equation}
so that, as $r \to 0$, we have $m(r) \approx \mathcal{O}(r^3)$, confirming the regularity condition proposed in \eqref{regularity1}.The event horizon is located at $r_h = \ell \sqrt{(4\mu)^{2/3} - P^2}$. For $P^2 = (4\mu)^{2/3}$, we obtain an extremal horizon ($r_h = 0$), whereas for $P^2 > (4\mu)^{2/3}$, no horizons exist. 

Expanding the metric for $r \to 0$, we have:
\begin{equation}
  f(r) \approx r^2 \left( \frac{1}{\ell^2} - \frac{4\mu \ell}{(P^2 \ell^2)^{3/2}} \right) + \mathcal{O}(r^3),
\end{equation}
where for $P > (4\mu)^{1/3}$ we find an Anti-de Sitter core, and for $P < (4\mu)^{1/3}$ we obtain a de Sitter core. In the extremal case, $P = (4\mu)^{1/3}$. Notably, the threshold for the existence of the event horizon coincides with the geometric transition of the core. In the limit $r\to \infty$ we recover the vacuum solution.

Calculating the Kretschmann scalar, we obtain the following expression
\begin{equation}
\mathcal{K} = 24 \Bigg[
\frac{2 \mu^2 \ell^2 \left(8 P^8 \ell^8 -4 P^6 r^2 \ell^6 +47 P^4 r^4 \ell^4 -12 P^2 r^6 \ell^2 +4 r^8\right)}
{\left(P^2 \ell^2 + r^2\right)^7} 
+ \frac{1}{\ell^4}+ 2 \mu P^2 \ell (r^2 - 4 P^2 \ell^2)
\left(\frac{1}{P^2 \ell^2 + r^2}\right)^{7/2}
\Bigg]
\end{equation}
From this expression, we see that in the limit $r \to 0$, $\mathcal{K}$ remains finite, ensuring the regularity of the solution at the origin. In the asymptotic limit $r \to \infty$, we recover the vacuum solution, for which $\mathcal{K}$ becomes a constant.

Now, calculating the Ricci scalar, we obtain
\begin{equation}
    \mathcal{R}=-\frac{12}{\ell^2} \left[ \mu P^2 \ell ^5 \left(r^2-4 P^2 \ell ^2\right) \left(\frac{1}{P^2 \ell ^2+r^2}\right)^{7/2}+1\right].
\end{equation}
Since the Kretschmann scalar is regular at the origin, it follows that all other curvature invariants are also regular there. In the asymptotic limit $r \to \infty$, $\mathcal{R} \to -12/\ell^2$.

In addition, we can obtain an electric-type source for a Bardeen-like black string. A source of this kind is already known in the spherically symmetric case (black hole) \cite{Rodrigues:2018bdc}. Here, however, we are interested in determining the corresponding source in cylindrical symmetry (black string).

The general expression for the electric field is given in Eq.~\eqref{campoeletrico}. Rather than fixing the NED model beforehand, we reconstruct the theory through a reverse-engineering procedure. Specifically, we start from the mass function $m(r)$ obtained from the purely magnetic solution and substitute it into Eqs.~\eqref{eletric1} and \eqref{eletric}. By solving these equations, we determine the functions $\mathcal{L}(r)$ and $\mathcal{L}_F(r)$ that generate the same geometry in the purely electric case. In this way, we obtain
\begin{equation}
    \mathcal{L}(r)=12 \mu \ell ^3 \left(\frac{1}{q^2 \ell ^2+r^2}\right)^{7/2} \left(3 q^2 r^2-2 q^4 \ell ^2\right),
\end{equation}
\begin{equation}
 \mathcal{L}_F(\mathcal{F}(r))=   -\frac{(q^2 \ell ^2+r^2)^{7/2}}{15 \mu r^6 \ell ^3 } \quad \text{and} \quad \mathcal{F}(r)=-\frac{450 \mu ^2 q^2 r^8 \ell ^6}{\left(q^2 \ell ^2+r^2\right)^7}.
\end{equation}

The above expressions satisfy the consistency relation $\mathcal{L}_F\,\mathcal{F}'(r)-\mathcal{L}'(r)=0$, ensuring that $\mathcal{L}_F=d\mathcal{L}/d\mathcal{F}$. Therefore, $\mathcal{L}(r)$ and $\mathcal{L}_F(r)$ consistently define a unique nonlinear electrodynamics model in parametric form. However, due to the highly nontrivial dependence of $\mathcal{F}(r)$ on the radial coordinate, it is not possible to invert $\mathcal{F}(r)$ analytically to obtain $r=r(\mathcal{F})$. Consequently, the Lagrangian cannot be expressed explicitly as a closed-form function $\mathcal{L}(\mathcal{F})$, and the theory is instead characterized by the parametric representation $\{\mathcal{L}(r),\mathcal{F}(r)\}$.

Returning to Eq.~\eqref{campoeletrico}, we obtain the electric field associated with this Lagrangian,
\begin{equation}
    E(r)=-15 \mu q r^4 \ell ^3 \left(\frac{1}{q^2 \ell ^2+r^2}\right)^{7/2}.
\end{equation}

Analyzing the three previous equations, we observe that this type of Lagrangian does not recover Maxwell electrodynamics in the limit $r \to \infty$, confirming the conditions established earlier: if we require a purely electric solution that is regular at the origin, then it cannot recover Maxwell behavior in the asymptotic limit $r \to \infty$. In fact, we have
\begin{equation}
    E(r) \to \frac{1}{r^3} \quad \text{for} \quad r \to \infty,
\end{equation}
confirming that $E(r)$ does not recover the Coulomb behavior in the weak-field regime.

\subsection{Hayward class}

The Hayward solution describes a regular black hole in the spherical case, such that as $r \to \infty$ the Schwarzschild metric is recovered, and as $r \to 0$ the spacetime forms a de Sitter-like core \cite{Hayward:2005gi}. Hayward, however, did not obtain the matter or field source that supports this type of spacetime. Thus, Fan and Wang \cite{Fan:2016hvf}, by means of a general model of nonlinear electrodynamics (NED) using magnetic monopoles, obtained the source Lagrangian for this solution. 

Based on this, the Lagrangian density is given by
\begin{equation} \label{hayward}
    \mathcal{L}(\mathcal{F}) = -\frac{3 (\alpha \mathcal{F})^{3/2}}{\alpha \left[1+(\alpha \mathcal{F})^{3/4}\right]^2}
\end{equation}
which, in terms of the radial coordinate, becomes:
\begin{equation}
    \mathcal{L}(\mathcal{F}(r))=-\frac{3(2\alpha \ell^2 P^2)^{3/2}}{\alpha \left[r^3+(2 \alpha \ell^2 P^2)^{3/4}\right]^2}.
\end{equation}
Again, we observe that as $r \to 0$, $\mathcal{L}(r)$ tends toward a constant value, namely $\mathcal{L} = -3/\alpha$. Since this Lagrangian generates regular solutions, this behavior was already expected, as shown in Section \ref{sec:main_equations}. Thus, we can solve Eq. \eqref{mu}, resulting in 
\begin{equation}
    m(r)=\frac{\mu r^3}{(r^3 + \lambda^3)} \implies f(r)=-\frac{4 \mu\ell r^2}{(r^3+\lambda^3)} +\frac{r^2}{\ell^2},
\end{equation}
where we have defined $\lambda^3=(2 \alpha \ell^2 P^2)^{3/4}$ and $\alpha=\lambda^3/8\ell \mu$, with the integration constant $C=\mu$. For $r \to \infty$, we obtain $f(r) \approx -4\mu\ell/r + r^2/\ell^2$ (the vacuum solution). The event horizon is located at $r_h = (4 \mu \ell^3 - \lambda^3)^{1/3}$.

Expanding the solution for $r \to 0$:
\begin{equation}
f(r) \approx r^2 \left(\frac{1}{\ell^2} - \frac{4 \mu \ell}{\lambda^3}\right) + \frac{4 \mu r^5 \ell}{\lambda^6} + \mathcal{O}\left(r^8\right).
\end{equation}
The term proportional to $r^2$ dictates the effective geometry of the core, which is determined by the value of $\lambda$. For $\lambda > (4\mu)^{1/3}\ell$, the core exhibits an Anti-de Sitter (AdS) behavior, whereas for $\lambda < (4\mu)^{1/3}\ell$, a de Sitter (dS) core is obtained. At the critical point $\lambda^3 = 4\mu\ell^3$, the event horizon coincides with the origin, and the spacetime becomes locally flat at the center as the leading-order curvature terms vanish.

Now, computing the Kretschmann scalar, we obtain
\begin{equation}
\begin{split}
\mathcal{K}=&\frac{24}{\ell ^4 \left(\lambda ^2+r^3\right)^6} 
\Big[r^{18} +6 \lambda ^2 r^{15}
+r^{12} \left(15 \lambda ^4+8 \mu ^2 \ell ^6+4 \lambda ^2 \mu  \ell ^3\right) 
+4 r^9 \left(5 \lambda ^6-8 \lambda ^2 \mu ^2 \ell ^6+\lambda ^4 \mu  \ell ^3\right) \\
&+3 r^6 \left(5 \lambda ^8+48 \lambda ^4 \mu ^2 \ell ^6-4 \lambda ^6 \mu  \ell ^3\right)
+2 \lambda ^6 r^3 \left(3 \lambda ^4-8 \mu ^2 \ell ^6-10 \lambda ^2 \mu  \ell ^3\right)
+\lambda ^8 \left(\lambda ^2-4 \mu  \ell ^3\right)^2\Big]
\end{split}
\end{equation}
which is regular at the origin, and in the limit $r\to \infty$ approaches a constant value, as expected for a black string solution. As with the Kretschmann scalar, the Ricci scalar must also be regular:
\begin{equation}
 \mathcal{R}=   -\frac{24 \lambda ^2 \mu  \ell  \left(r^3-2 \lambda ^2\right)}{\left(\lambda ^2+r^3\right)^3}-\frac{12}{\ell ^2}
\end{equation}
and it also recovers the same asymptotic value in the limit $r\to \infty$.

Finally, we can obtain a purely electric source for the Hayward class. To do so, we follow the same procedure used for the Bardeen-type solution. Starting from the mass function $m(r)$ obtained from a magnetically charged source, we solve Eqs.~\eqref{eletric1} and \eqref{eletric} to determine $\mathcal{L}(\mathcal{F})$ and $\mathcal{L_F}$. In this way, we obtain
\begin{equation}
    \mathcal{L}(\mathcal{F}(r))=\frac{24 \mu  \ell  \left(2 \lambda ^2 r^3-\lambda ^4\right)}{\left(\lambda ^2+r^3\right)^3}, \quad    
    \mathcal{L_F}(\mathcal{F}(r))=-\frac{q^2 \left(\lambda ^2+r^3\right)^3}{18 \lambda ^2 \mu  r^7 \ell }.
\end{equation}

Since this is a regular Lagrangian, it is expected that the Maxwell limit is not recovered for $r\to\infty$, where $\mathcal{F}\approx0$. To verify this, let us examine the electric field of this NED model:
\begin{equation}
    E(r)=-\frac{18 \lambda ^2 \mu  r^5 \ell }{q \left(\lambda ^2+r^3\right)^3}.
\end{equation}

In the asymptotic limit $r\to\infty$, we obtain
\begin{equation}
    E(r)\to \frac{1}{r^4},
\end{equation}
confirming that the electric field does not behave as in Maxwell theory in the weak-field regime.
\section{Causality and unitarity}\label{sec:causality}
In classical Maxwell theory, photons propagate along null geodesics of the background metric $g_{\mu\nu}$, thereby defining the boundary of the light cone. To preserve causality, no signal can propagate outside this cone. In nonlinear electrodynamics, however, the interactions become strong enough that the electromagnetic field self-interacts, leading to vacuum polarization effects. Small perturbations propagating over a nontrivial background field no longer follow null geodesics of $g_{\mu\nu}$ but instead propagate according to an effective metric $g^{\mathrm{eff}}_{\mu\nu}$. In the geometrical optics (eikonal) approximation, the wave vectors of these perturbations satisfy null conditions with respect to $g^{\mathrm{eff}}_{\mu\nu}$, determining the modified causal structure induced by nonlinearities. 

The equations that guarantee causality and unitarity, in our notation, are as follows:
\begin{equation}\label{causalidade}
\mathcal{L_F} < 0 \ ,\ \mathcal{L_{FF}} \geq 0 \quad \text{and} \quad \Phi = \mathcal{L_F} + 2\mathcal{F}\mathcal{L_{FF}} \leq 0.
\end{equation}
From these equations, we can state the following theorem regarding purely magnetic solutions \cite{Bronnikov2023}:
\begin{theorem} \label{teoremac}
    Regular solutions with cylindrical symmetry generated by purely magnetic NEDs necessarily violate causality in regions near the axis $r=0$.
\end{theorem}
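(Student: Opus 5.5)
We work in the purely magnetic sector, where $\mathcal{F}=2P^2\ell^2/r^4$ is positive and strictly decreasing in $r$. The axis $r\to0$ therefore corresponds to $\mathcal{F}\to+\infty$. From the regularity analysis of the magnetic subsection we may assume $\mathcal{L}(\mathcal{F})\to\mathcal{L}_0$ finite as $\mathcal{F}\to\infty$. More precisely, Eq.~\eqref{nu} with $m''=\mathcal{O}(r)$ forces $T_2^{\;2}=\tfrac12\mathcal{L}-\mathcal{F}\mathcal{L_F}$ to remain finite at the axis, so $\mathcal{F}\mathcal{L_F}$ is bounded as $\mathcal{F}\to\infty$ and in particular $\mathcal{L_F}\to0$. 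The plan is to show that these asymptotics are incompatible with the conditions \eqref{causalidade} holding on an entire neighbourhood $\mathcal{F}>\mathcal{F}_*$ of the axis. We argue by contradiction: suppose $\mathcal{L_F}<0$, $\mathcal{L_{FF}}\ge0$ and $\Phi\le0$ hold for all sufficiently large $\mathcal{F}$.

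The key step is to rewrite $\Phi$ as a total derivative in a suitable variable. Setting $x=\sqrt{\mathcal{F}}$, which is proportional to $1/r^2$, one checks that
\begin{equation}
\Phi=\mathcal{L_F}+2\mathcal{F}\mathcal{L_{FF}}=\frac{1}{\sqrt{\mathcal{F}}}\frac{d}{d\mathcal{F}}\Bigl(2\mathcal{F}^{3/2}\mathcal{L_F}\Bigr)\cdot\frac{1}{2},
\end{equation}
or more directly $\frac{d}{d\mathcal{F}}\bigl(\sqrt{\mathcal{F}}\,\mathcal{L_F}\bigr)=\Phi/(2\sqrt{\mathcal{F}})$. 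Hence $\Phi\le0$ means that $h(\mathcal{F})\equiv\sqrt{\mathcal{F}}\,\mathcal{L_F}$ is non-increasing. Since $\mathcal{L_F}<0$, $h$ is negative, so for $\mathcal{F}>\mathcal{F}_*$ we obtain $h(\mathcal{F})\le h(\mathcal{F}_*)=-c<0$, that is, $\mathcal{L_F}\le -c/\sqrt{\mathcal{F}}$. Integrating from $\mathcal{F}_*$ to $\mathcal{F}$ gives
\begin{equation}
\mathcal{L}(\mathcal{F})\le\mathcal{L}(\mathcal{F}_*)-2c\bigl(\sqrt{\mathcal{F}}-\sqrt{\mathcal{F}_*}\bigr)\to-\infty ,
\end{equation}
which contradicts $\mathcal{L}\to\mathcal{L}_0$. (In terms of $r$, this says $\mathcal{L}$ would diverge like $-r^{-2}$ and $m'\sim r^0$, violating \eqref{regularity2}.) Equivalently, $\mathcal{F}\mathcal{L_F}\le -c\sqrt{\mathcal{F}}$ is unbounded, which contradicts the finiteness of $T_2^{\;2}$. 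Either route closes the argument, and the condition $\mathcal{L_{FF}}\ge0$ is not even needed. We conclude that $\Phi>0$, or $\mathcal{L_F}\ge0$, must occur at points arbitrarily close to the axis, so causality is violated in regions near $r=0$.

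The main obstacle is making the statement precise about where the violation occurs. The monotonicity argument only shows that the conditions cannot all hold on a full neighbourhood of $\mathcal{F}=\infty$. To obtain a violation at points accumulating at $r=0$, we apply the same argument on every interval $(\mathcal{F}_*,\infty)$, which gives a sequence of violating points with $\mathcal{F}\to\infty$. A secondary subtlety is the borderline case $\mathcal{L_F}\equiv0$ near the axis, for which $\Phi=0$. This case gives $T_\mu^{\;\nu}\propto\mathcal{L}_0\delta_\mu^{\;\nu}$, a pure cosmological-constant core in which the Lagrangian is locally constant and carries no electromagnetic dynamics. We will exclude it by the strict condition $\mathcal{L_F}<0$ in \eqref{causalidade}, so it too counts as a violation. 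For concrete illustration, one can check that the Bardeen and Hayward Lagrangians above give $\Phi>0$ for small $r$.
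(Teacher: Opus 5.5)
Your argument is correct and follows the paper's own route, made rigorous: the paper uses the same identity $\Phi=2\sqrt{\mathcal{F}}\,\frac{d}{d\mathcal{F}}\bigl(\sqrt{\mathcal{F}}\,\mathcal{L_F}\bigr)$ with finiteness of $\mathcal{L}$ as $\mathcal{F}\to\infty$, but only asserts that fast decay of $\mathcal{L_F}$ forces $\Phi>0$, whereas your monotonicity bound $\mathcal{L_F}\le -c/\sqrt{\mathcal{F}}$ proves it (and correctly yields violations at points accumulating at the axis rather than on a full neighbourhood). One slip to fix: your first displayed identity, with $2\mathcal{F}^{3/2}\mathcal{L_F}$, evaluates to $\tfrac32\mathcal{L_F}+\mathcal{F}\mathcal{L_{FF}}$ rather than $\Phi$, so delete it and keep only the ``more direct'' version you actually use.
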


\begin{proof}
We can write $\Phi = 2 \sqrt{\mathcal{F}}\frac{d(\mathcal{L_F} \sqrt{\mathcal{F}})}{d\mathcal{F}}$. Therefore, a regular NED at the origin requires $\mathcal{L}(\mathcal{F})$ to assume a finite value in the limit $\mathcal{F} \to \infty$. For the integral $\mathcal{L} = \int \mathcal{L_F} d\mathcal{F}$ to be finite, $\mathcal{L_F}$ must tend to zero sufficiently fast for large $\mathcal{F}$. Consequently, we would have $\Phi > 0$. Since our initial premise was that $\mathcal{L_F} < 0$, it is impossible to satisfy the first and third conditions simultaneously.
\end{proof}

Our first two cases correspond to purely magnetic solutions that are singular at the origin. According to the previous theorem, one should expect the conditions \eqref{causalidade} to be satisfied.

For the Euler--Heisenberg Lagrangian we have
\begin{equation}
    \mathcal{L}_F = 2\alpha \mathcal{F} - 1, \quad 
    \mathcal{L}_{FF} = 4\alpha > 0 
    \quad \text{and} \quad 
    \Phi = 6\alpha \mathcal{F} - 1 .
\end{equation}

In this case, the correction term $\alpha \mathcal{F}^2$ is only valid for sufficiently small fields. Therefore, causality is always preserved.

For the logarithmic Lagrangian, we obtain
\begin{equation}
    \mathcal{L_F} = -\frac{\beta^2}{\beta^2+\mathcal{F}} < 0, \quad \mathcal{L_{FF}} = \frac{\beta^2}{\left(\beta^2+\mathcal{F}\right)^2} > 0 \quad \text{and} \quad \Phi = \frac{\beta^2 \left(\mathcal{F}-\beta^2\right)}{\left(\beta^2+\mathcal{F}\right)^2},
\end{equation}
where the first two conditions are globally satisfied, but the third causality condition ($\Phi \leq 0$) strictly requires $\mathcal{F} \leq \beta^2$. Expanding the parameter $\Phi$ near the origin, we obtain:
\begin{equation}
    \Phi \approx \frac{\beta^2 r^4}{2 P^2 \ell^2} - \frac{3 \beta^4 r^8}{4 P^4 \ell^4} + \mathcal{O}\left(r^9\right).
\end{equation}
Although we have $r\to0$ ,$\Phi = 0$, the expansion demonstrates that $\Phi$ approaches zero taking strictly positive values. Since the solution is singular, there will exist a critical value of $r$ at which $\mathcal{F}$ exceeds the value $\beta^2$.  Consequently, causality is violated in a neighborhood of the center. This result illustrates that causality violation is not exclusive to regular black holes; models with a central singularity can also suffer from superluminal propagation and instabilities if the magnetic field exceeds the critical threshold established by the parameter $\beta$.

Now, for the Ayon--Beato and Garcia Lagrangian we obtain

\begin{equation}
    \mathcal{L}_F =
    -\frac{15 P r^6 \ell^3}{8 g (P^2 \ell^2 + r^2)^{7/2}} < 0, \quad
    \mathcal{L}_{FF} =
    -\frac{15 r^{10} \ell (r^2 - 6P^2 \ell^2)}
    {64 P g (P^2 \ell^2 + r^2)^{9/2}}
\end{equation}

\begin{equation}
    \Phi =
    -\frac{15 P r^6 \ell^3 (3r^2 - 4P^2 \ell^2)}
    {16 g (P^2 \ell^2 + r^2)^{9/2}} .
\end{equation}

Therefore, the condition on $\Phi$ is satisfied only for $r \geq 2P\ell/\sqrt{3}$. In the region $0 < r < 2P\ell/\sqrt{3}$ this condition is violated and the spacetime ceases to be causal, as predicted by the previous theorem, which was already known to hold for spherical symmetry and whose validity we now confirm also for cylindrical symmetry. This result indicates that, for this Lagrangian, the electromagnetic field becomes superluminal, meaning that photons propagate outside the light cone. In this region the theory loses its predictive power regarding the future evolution of the system.

Finally, for the Hayward case, we have
\begin{equation}
    \mathcal{L}_F = -\frac{9 \sqrt{\alpha \mathcal{F}}}{2 \left[(\alpha \mathcal{F})^{3/4}+1\right]^3} < 0, \quad \mathcal{L}_{FF} = \frac{9 \alpha \left[7 (\alpha \mathcal{F})^{3/4}-2\right]}{8 \sqrt{\alpha \mathcal{F}} \left[(\alpha \mathcal{F})^{3/4}+1\right]^4} \,,
\end{equation}
\begin{equation}
   \Phi = \frac{9 \sqrt{\alpha \mathcal{F}} \left[5 (\alpha \mathcal{F})^{3/4}-4\right]}{4 \left[(\alpha \mathcal{F})^{3/4}+1\right]^4} \,.
\end{equation}
We can observe that the condition on $\mathcal{L}_{FF}$ requires $\alpha \mathcal{F} \geq (2/7)^{4/3} \approx 0.188$. Since $\mathcal{F} \propto r^{-4}$, this condition is violated as $r \to \infty$, because in this limit $\mathcal{F}$ assumes very small values. This result occurs due to the behavior of this NED model in the weak-field regime, where it does not recover Maxwell electrodynamics. On the other hand, the condition on $\Phi$ imposes that $\alpha \mathcal{F} \leq (4/5)^{4/3} \approx 0.743$. For large values of $r$, this condition is satisfied; however, near the axis (origin), $\mathcal{F}$ diverges, exceeding the permitted value and thus violating the condition. Therefore, similar to the Bardeen case, we have a region where causality is broken.

\section{Conclusions}\label{sec:conclusions}
In this work, we have systematically investigated the coupling of General Relativity with NED within four-dimensional cylindrical topologies. Our primary objective was to resolve the intrinsic axial curvature singularity of classical black strings through the gravitational backreaction of nonlinear electromagnetic fields governed by a general Lagrangian $\mathcal{L}(\mathcal{F})$.

Through a model-independent theoretical framework, we extended established regularity conditions from spherically symmetric spacetimes to cylindrical geometries. A central result of our analysis is the explicit demonstration that well-known no-go theorems strictly hold for black strings. Specifically, we proved that regular, purely electric configurations, as well as dyonic ones, cannot be generated by any NED Lagrangian that recovers the standard Maxwell limit, $\mathcal{L}(\mathcal{F}) \approx -\mathcal{F}$, in the weak-field regime establishing a fundamental constraint for axial symmetry. Our results are consistent with those reported in Refs.\cite{Bronnikov:2003ru,Bronnikov:2000vy,Bronnikov2023}, where both cylindrically and spherically symmetric configurations were investigated. The main distinction of the present work is that we consider the Lemos black-string geometry, a cylindrically symmetric black-hole solution with Anti-de Sitter asymptotics. In contrast to the horizonless solitonic configurations studied in Ref.\cite{Bronnikov:2003ru}, the solutions analyzed here may possess event horizons and are supported by a negative cosmological constant.

Applying these foundational theorems, we constructed and analyzed specific exact solutions. First, we derived singular black string metrics sourced by Euler--Heisenberg and logarithmic electrodynamics, demonstrating how nonlinearities modify the spacetime structure while still harboring an axial singularity, consistent with the aforementioned no-go constraints. Subsequently, moving to models that circumvent these limitations, we successfully generated entirely regular black string solutions belonging to the Bardeen and Hayward classes. In these purely magnetic and electric configurations, the severe axial singularity is effectively smoothed out by the nonlinear electromagnetic environment, confirming our theoretical predictions.

Finally, because singularity resolution via NED often introduces potential physical pathologies, we subjected our solutions to stringent causality and unitarity constraints. Contrary to the ideal scenario where superluminal propagation is entirely avoided, our analysis highlights a fundamental physical tension. We explicitly demonstrated, through a general theorem, that regular purely magnetic black strings necessarily violate causality in the deep core region. This was verified in our exact solutions: while the weak-field Euler--Heisenberg model preserves causal structure, both the singular logarithmic model and the regular configurations exhibit superluminal photon propagation near the axis. Consequently, there exists a critical radius within which the effective geometry permits signals to propagate outside the light cone, signaling a breakdown of classical predictability in the innermost region.

Despite these causal limitations near the core, the successful generalization of regular NED black holes to cylindrical topologies, alongside the rigorous mapping of their physical constraints, opens multiple board avenues for future research. Since the exterior geometries remain well-behaved, immediate investigations could focus on their thermodynamic properties, including the analysis of phase transitions, and their classical stability against scalar and tensor perturbations through the study of quasinormal modes. Furthermore, the robust exterior structure invites detailed studies of particle motion, geodesic trajectories, and potentially the optical appearance of these extended objects.

Extending our static framework to include angular momentum, thereby constructing rotating regular black strings, represents another highly relevant continuation. From a higher-dimensional perspective, generalizing these NED couplings to black branes could yield deeper insights into the AdS/CFT correspondence and holographic fluid dynamics. Finally, exploring quantum effects in these backgrounds, such as Hawking radiation and vacuum polarization, could further illuminate the intricate interplay between quantum field theory and regularized spacetime geometries.

\acknowledgments{The authors G.A. and M.S.C. would like to thank Conselho Nacional de Desenvolvimento Cient\'{i}fico e Tecnol\'{o}gico (CNPq) and Fundação Cearense de Apoio ao Desenvolvimento Científico e Tecnológico (FUNCAP) through PRONEM PNE0112- 00085.01.00/16, for the partial financial support. V.H.U.B. and T.M.C is supported by Coordena\c c\~{a}o de Aperfei\c coamento de Pessoal de N\'{i}vel Superior - Brasil (CAPES) - Finance Code 001.
}
\bibliographystyle{apsrev4-2}
\bibliography{ref}
\end{document}